\newcommand{\RM}{\mathbb{R}}
\newcommand{\ZM}{\mathbb{Z}}
\newcommand{\QM}{\mathbb{Q}}
\newcommand{\NM}{\mathbb{N}}
\newcommand{\CM}{\mathbb{C}}
\def\e{\mathrm{e}}
\def\i{\mathrm{i}}
\def\d{\mathrm{d}}
\newtheorem{theorem}{Theorem}
\newtheorem{remark}{Remark}
\begin{document}

% Use the \preprint command to place your local institutional report number 
% on the title page in preprint mode.
% Multiple \preprint commands are allowed.
%\preprint{}

\title{Large-time limit of the quantum Zeno effect} %Title of paper

\author{Paolo Facchi}
\email[]{paolo.facchi@ba.infn.it}
\affiliation{Dipartimento di Fisica and MECENAS, Universit\`a di Bari, \\
I-70125 Bari, Italy}

\author{Marilena Ligab\`o}
\email[]{marilena.ligabo@uniba.it}
\affiliation{Dipartimento di Matematica, Universit\`a di Bari, \\
I-70125 Bari, Italy}

\date{\today}

\begin{abstract}
If very frequent periodic measurements ascertain whether a quantum system is still in its initial state, its evolution is hindered.
This peculiar phenomenon is called quantum Zeno effect.
We investigate the large-time limit of the survival probability 
as the total observation time scales as a power of the measurement frequency, $t\propto N^\alpha$.
The limit survival probability exhibits a sudden jump from 1 to 0 at $\alpha=1/2$, the threshold 
between the quantum Zeno effect and a diffusive behavior. 
Moreover, we show that for $\alpha\geq 1$ the limit probability becomes sensitive to the spectral properties of the initial state and to arithmetic properties of the measurement periods.
\end{abstract}

\pacs{03.65.Xp  }% insert suggested PACS numbers in braces on next line
%03.65.Aa	Quantum systems with finite Hilbert space; 03.65.-w	Quantum mechanics; 03.65.Db 	Functional analytical methods;
%\subjclass{81S30}
%\keywords{Weyl quantization; Wigner transform; Moyal product; Finite dimensional Hilbert spaces.}
\maketitle %\maketitle must follow title, authors, abstract and \pacs

\section{Introduction}
The evolution of a quantum system is halted when very many measurements are performed in a finite time, in order to check whether the system is still in its initial state. This phenomenon is called Quantum Zeno Effect~\cite{misra} (QZE): the survival probability at a given time goes to one as the measurement frequency increases. 

The survival probability after $N$ measurements in a time $t$ is expressed by a product formula depending on $t$ and $N$. In this paper we  investigate the uniformity in time of the QZE, and study the behavior of the Zeno product formula for large $N$ \emph{and} $t$ .

Let us first recall the basics of the QZE. Let a quantum system be prepared, at time $t = 0$, in the state $\psi$, a normalized vector in the separable Hilbert space $\mathcal{H}$. We denote by $\langle \cdot | \cdot \rangle$ the scalar product in $\mathcal{H}$. The system evolves under the action of the Hamiltonian $H$, a self-adjoint operator on $\mathcal{H}$, through the unitary group $t\mapsto \exp(-\i tH/\hbar)$. The quantities
\begin{equation}
\label{eq:Atdef}
\mathcal{A}(t)=  \langle \psi | \exp\Big(-\frac{\i t}{\hbar}H\Big)\psi \rangle
\end{equation}
and
\begin{equation}
\label{eq:ptdef}
p(t) = \left|\mathcal{A}\left(t\right)\right|^2 = \left|\langle \psi | \exp\Big(-\frac{\i t}{\hbar}H\Big)\psi \rangle \right|^2
\end{equation}
are called survival (or return) amplitude and probability, respectively, and represent the amplitude and probability that the quantum system is found back in the initial state $\psi$ at time $t$. 

If the state~$\psi$ is in the domain of the Hamiltonian~$H$, we have for $t\to0$
\begin{equation}\label{Taylorp}
p(t)= 1-\frac{t^2}{\hbar^2}\left(\langle H \psi | H\psi \rangle - \langle \psi |H\psi \rangle^{2} \right) + o\left( t^2 \right),
\end{equation}
where 
$\langle H \psi | H \psi \rangle - \langle \psi |H\psi \rangle^{2}$ is the variance of the Hamiltonian in the state $\psi$.

Let us now carry out $N$ measurements at time intervals $\tau=t/N$, in order to check whether the system remains in its initial state. If at each and every time the measurement has a positive outcome and the system is found in its initial state, the state ``collapses'' and the evolution starts anew from $\psi$. Thus the
survival probability after  $N$ measurements reads
\begin{equation}
p^{(N)}(t) := p\left(\frac{t}{N }\right)^N= \left|\langle \psi | \exp\Big(-\frac{\i t}{\hbar N}H\Big)\psi \rangle\right|^{2N}.
\label{eq:PNtdef}
\end{equation}
This is called \emph{Zeno product formula}, and will be the subject of our investigation.

The limit of infinitely frequent measurements, $N \to +\infty$, of the Zeno product formula 
can be easily computed using  the Taylor expansion in~(\ref{Taylorp}): if  the initial state $\psi$ is in the domain of the Hamiltonian $H$ one gets
\begin{equation}
\lim_{N\to + \infty} p^{(N)}(t) =1,
\label{eq:QZElimit}
\end{equation}
uniformly in $t$ on compact subsets
of $\RM$, see~\cite {ZenoMP,exner,artzeno}. 
Therefore, if one performs frequent measurements on a quantum system in a given time interval $[0, t]$, a QZE takes place~\cite{misra}:
the transitions to states different from the initial one are hindered,
despite the action of the Hamiltonian (in general the state $\psi$ is not  an eigenstate of the Hamiltonian $H$).

The QZE has been successfully demonstrated in a variety of physical systems, on experiments involving ionic hyperfine levels~\cite{Itano90}, photons~\cite{kwiat}, nuclear spins~\cite{Chapovsky}, optical pumping~\cite{molhave2000},  ultracold atoms~\cite{raizenlatest}, level dynamics of individual ions~\cite{balzer2002},  Bose-Einstein condensates~\cite{ketterle}, optical systems~\cite{hosten}, and cavity quantum electrodynamics~\cite{haroche}. 
For a review on the mathematical and physical aspects of the subject see~\cite {ZenoMP}.

The QZE can be obtained both by pulsed and continuous measurements, as well as by a strong interaction~\cite{Schulman97,Schulman98,zenopraga}.
Recently it has been realized that, by exploiting the quantum Zeno dynamics, one gets a powerful approach to control. The key idea is to engineer a given evolution by a rapid sequence of projections~\cite{AA88,AP89,Bal2000}.
This can yield a Berry phase~\cite{berry} or, more generally, non-Abelian geometric phases~\cite{ShapereWilczek}, a resource for holonomic quantum computation. Moreover, the QZE can be seen as an effective way of imposing constraints and boundary conditions~\cite{regularize,exner,EINZ}.
Finally, notice that the QZE is a purely quantum phenomenon: in classical mechanics it is not observed, since the measurement process can be conceived so that it does not interfere with the evolution of the system.

In this article we want to investigate  the behavior of the Zeno product formula~(\ref{eq:PNtdef}) as the observation time becomes large,
$t \to +\infty$, namely the  double limit: 
\begin{equation}\label{s.c.limit}
\mathop{\lim_ {t \to +\infty  }}_{N \to +\infty}  p^{(N)}(t) .
\end{equation}

Notice that, since the time dependence in~(\ref{eq:PNtdef}) is  given through  the ratio $t/\hbar$, the long-time limit~(\ref{s.c.limit})  is in fact a semiclassical limit, where the Planck constant
$\hbar \to 0$, namely, 
\begin{equation}\label{s.c.limit1}
\mathop{\lim_ {t \to +\infty  }}_{N \to +\infty}  p^{(N)}(t) =  \mathop{\lim_ {t \to +\infty  }}_{N \to +\infty}  \left|\langle \psi | \exp\Big(-\frac{\i t}{\hbar N}H\Big)\psi \rangle\right|^{2N} = \mathop{\lim_ {\hbar \to 0  }}_{N \to +\infty}  \left|\langle \psi | \exp\Big(-\frac{\i t}{\hbar N}H\Big)\psi \rangle\right|^{2N} .
\end{equation}
In this respect, the limit~(\ref{s.c.limit1}) answers  the following
question: what happens to the evolution of the system when we
compare the period between two successive measurements with the
quantum scale given by $\hbar/E_0$, with $E_0$ being the relevant energy scale of the state?

For the analysis of the classical limit of the QZE see~\cite{zenosemiclassic}. 
By a semiclassical analysis on phase space~\cite{tomolectures,L2016}, it can be shown that the QZE vanishes at all orders  in the Planck constant $\hbar$, in the limit $\hbar\to0$,  and thus it is a purely quantum phenomenon without classical analogue, at the same level of tunneling.
(Notice, however, that at variance with~\cite{zenosemiclassic}, in the present situation the state $\psi$ and the Hamiltonian $H$ do not depend on $\hbar$.)

Heuristically, if we perform first the limit in $N$ and
then the limit in $t$ we get QZE, namely
\begin{equation}\label{quantum}
 \lim_{t \to +\infty}\lim_{N \to +\infty}  p^{(N)}(t) =1.
\end{equation}
Conversely, for a decaying system, if we
invert the order of the two limits we obtain a classical behavior, namely 
\begin{equation}\label{classical}
 \lim_{N \to +\infty} \lim_{t \to +\infty} p^{(N)}(t) =0.
\end{equation}
Therefore, the limit~(\ref{s.c.limit}) does not exist because it
depends on the way in which it is computed. 

In order to better understand the transition from~(\ref{quantum}) to~(\ref{classical}), we  look at the double limit~(\ref{s.c.limit}) when $t$ diverges as a power of $N$, i.e. 
\begin{equation}
t=\tau N^{\alpha},
\end{equation} 
where $\tau>0$ is a fixed time interval and $\alpha\geq0$, see the right panel of Fig.~\ref{fig}. In this case the survival probability $p^{(N)}(t)$ depends only on $N$, $\alpha$ and $\tau$. Therefore, in the following we will consider the product formula
\begin{equation}
 p_{N,\alpha}(\tau):= p^{(N)}(\tau N^{\alpha} ) 
 = p(\tau N^{\alpha-1} )^N 
 = \left|\langle \psi | \exp\left(-\frac{\i \tau }{\hbar} N^{\alpha-1} H\right)\psi \rangle \right|^{2N},
 \label{survival prob}
\end{equation}
and  investigate the limit
\begin{equation}\label{eqn:Nlimit}
p^{(\infty)}_{\alpha}(\tau) = \lim_{N \to +\infty} p_{N,\alpha}(\tau),
\end{equation}
for different values of $\alpha \geq 0$.

\begin{figure}
\centering
\resizebox{0.85\textwidth}{!}{\includegraphics{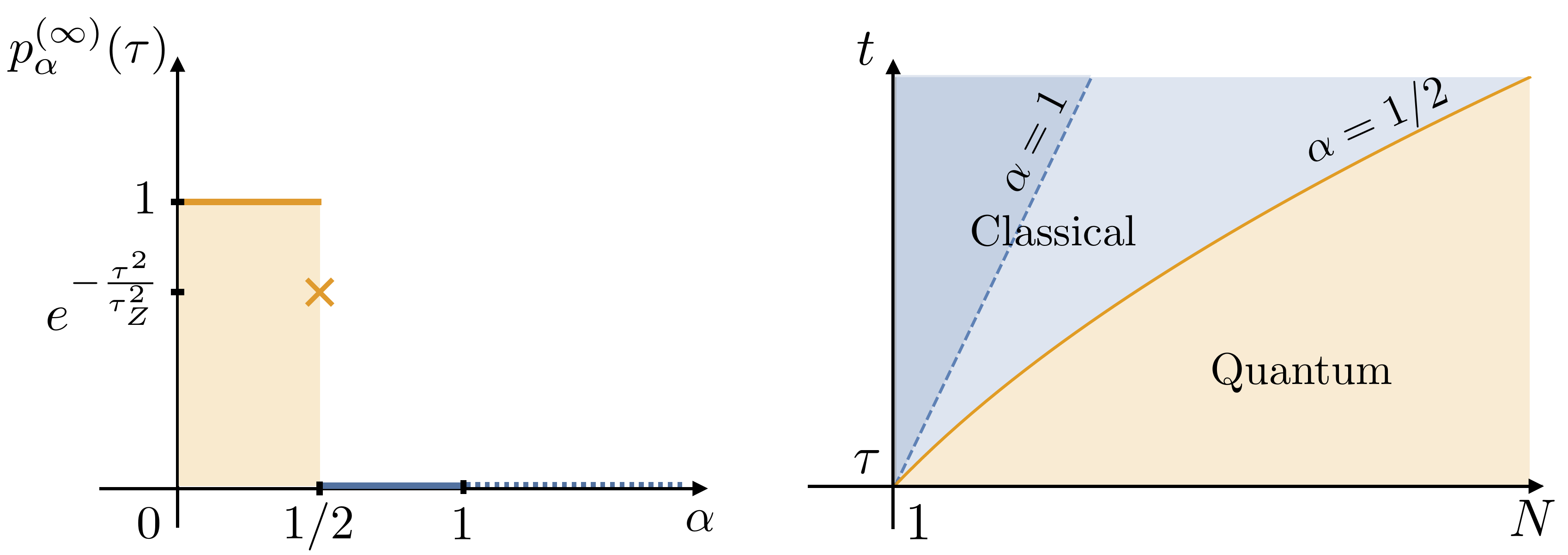}
}
\caption{(Color online) Left panel: The value of the limit $p^{(\infty)}(\tau)$ in~(\ref{eqn:Nlimit}) as a function of the exponent~$\alpha$. Observe the abrupt transition at $\alpha=1/2$. Right panel: quantum and classical regimes in the $N-t$ plane. The QZE effect is along the horizontal axis, but it keeps holding along all curves under the critical parabola $t=\tau N^{1/2}$.}
\label{fig} 
\end{figure}

The value $\alpha=0$ will correspond to the QZE limit~(\ref{eq:QZElimit}), 
\begin{equation}
p^{(\infty)}_{0}(\tau) = \lim_{N\to\infty} p\left(\frac{\tau}{N}\right)^N,
\end{equation}
while the value $\alpha=1$ will correspond to the large-time limit of an evolution stroboscopically measured with period $\tau$,
\begin{equation}
\label{eq:kickslimit}
p^{(\infty)}_{1}(\tau) = \lim_{N\to\infty} p(\tau)^N.
\end{equation} 
The latter regime, describing a quantum system subject to periodic kicks, has become a paradigmatic example in the study of quantum chaos~\cite{Berry79,DegliEspostiGraffi}. It represents a standard test bed for the investigation of different features of quantum systems  whose classical counterparts have a chaotic evolution. In Ref.~\cite{qchaos} the dynamics of a kicked quantum system undergoing repeated measurements of momentum has been investigated. A diffusive behavior has been obtained, even when the dynamics of the classical counterpart is not chaotic, and in general, the system has been shown to have an anomalous diffusive behavior, characteristic of intermittent classical dynamical systems and random walks in random environments~\cite{BCLL2016}.

Thus as $\alpha$ ranges from 0 to 1, one goes from QZE to a kicked dynamics, and for a decaying system the limit probability $p^{(\infty)}_{\alpha}(\tau)$ goes from $1$ to $0$. We will show that the transition is abrupt with a threshold at $\alpha=1/2$, as shown in Fig.~\ref{fig}.

Moreover, we will also consider larger values of the exponent, i.e.\ $\alpha > 1$, which correspond to stroboscopic measurements with a larger and larger period. In such a case  the limit probability $p^{(\infty)}_{\alpha}(\tau)$ of a decaying system is obviously 0, but interesting exceptions will occur at particular values of $\alpha$ and $\tau$ for systems with recurrences.

Notice also that, as a consequence of the previous discussion, the  limit~(\ref{eqn:Nlimit}) can be viewed as the semiclassical limit~(\ref{s.c.limit1}) when $\hbar$ goes to zero as a power of $N$, i.e. 
\begin{equation}
\hbar=\frac{\hbar_{{0}}}{N^{\alpha}},
\end{equation} 
where $\hbar_{{0}}>0$ is a fixed constant. Therefore, for short in the following we will often refer to the regime with zero limit probability as the \emph{classical} regime, as opposed to a nonzero limit probability characteristic of a \emph{quantum} regime, see Fig \ref{fig}.

The article is organized as follows:  in Section \ref{sect:mainresult1} we discuss the case $\alpha < 1$  and we show that $\alpha=1/2$  is the threshold exponent between quantum and classical behavior; in Section \ref{sect:mainresult2} we focus on the case $\alpha \geq 1$ and we prove that, essentially, the system exhibits always a classical behavior, but the limit becomes sensitive to the spectral properties of the state $\psi$ and to some interesting arithmetical properties of $\tau$ and $\alpha$.

\section{Threshold exponent between quantum and classical behavior}\label{sect:mainresult1}
In this section we discuss the case $0 \le \alpha < 1$ and we show that $\alpha=1/2$ is the threshold exponent between quantum and classical behavior.

\begin{theorem}\label{thm:mainth1}
Let $\psi$ be in the domain of $H$, i.e.  $\| H \psi \| < +\infty $. The limit~(\ref{eqn:Nlimit}) of the product formula~(\ref{survival prob}) has the following  behavior:
\begin{enumerate}
    \item[(i)] If $0 \le \alpha < 1/2$ then
\begin{equation}
p^{(\infty)}_{\alpha}(\tau) =1,
\end{equation}
uniformly in $\tau$ on compact subsets of $\RM$;
    \item[(ii)] If
$\alpha=1/2$ 
then
\begin{equation}
p^{(\infty)}_{1/2}(\tau)=\exp\left(-\frac{\tau^2}{ \tau_{Z}^2}\right) ,
\label{eq:Gauss}
\end{equation}
uniformly in $\tau$ on compact subsets of $\RM$, where 
\begin{equation}
\tau_{Z}^{-2}=\frac{1}{\hbar^2}\left( \langle H \psi | H \psi \rangle - \langle \psi |H\psi \rangle^{2}\right) ;
\label{eq:tauZ}
\end{equation}
    \item[(iii)] If
$1/2 < \alpha <1$ and $\psi$ is not an eigenstate of $H$, then
\begin{equation}
p^{(\infty)}_{\alpha}(\tau) =0,
\end{equation}
uniformly in $\tau$ on compact subsets of $\RM \setminus \{0\}$.
\end{enumerate}
\end{theorem}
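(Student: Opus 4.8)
The plan is to reduce everything to the small-time behaviour of a single factor $p(s)$ with $s=\tau N^{\alpha-1}$, which tends to $0$ since $\alpha<1$, and then to track how raising to the power $N$ amplifies the deviation of $p(s)$ from $1$. The cleanest bookkeeping comes from the spectral representation. Writing $\d\mu$ for the spectral measure of $H$ in the state $\psi$ (a probability measure, since $\|\psi\|=1$), so that $\mathcal{A}(s)=\int \e^{-\i s\lambda/\hbar}\,\d\mu(\lambda)$, a symmetrisation in the two integration variables kills the imaginary part and gives
\begin{equation}
1-p(s)=\int\!\!\int\Big(1-\cos\frac{s(\lambda-\lambda')}{\hbar}\Big)\,\d\mu(\lambda)\,\d\mu(\lambda').
\end{equation}
The hypothesis $\psi\in\mathrm{Dom}(H)$ enters exactly here: it makes $\lambda\mapsto\lambda$ lie in $L^2(\d\mu)$, so that $\int\!\int(\lambda-\lambda')^2\,\d\mu\,\d\mu=2\hbar^2\tau_Z^{-2}<\infty$.

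From $0\le 1-\cos x\le x^2/2$ I would first extract the two-sided control $0\le 1-p(s)\le s^2\tau_Z^{-2}$. More importantly, dividing by $s^2$ and letting $s\to0$, the integrand converges pointwise to $(\lambda-\lambda')^2/(2\hbar^2)$ and is dominated, uniformly in $s$, by that same $L^1(\d\mu\otimes\d\mu)$ function; dominated convergence then yields
\begin{equation}
g(s):=\frac{1-p(s)}{s^2}\longrightarrow\frac{1}{\tau_Z^2}\qquad(s\to0),
\end{equation}
which is the rigorous and quantitative replacement for the formal Taylor expansion~(\ref{Taylorp}).

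Substituting $s_N=\tau N^{\alpha-1}$ and setting $\epsilon_N=1-p(s_N)$, everything is then governed by the exponent $2\alpha-1$ through the identity
\begin{equation}
N\epsilon_N=N s_N^2\,g(s_N)=\tau^2 N^{2\alpha-1}\,g(s_N).
\end{equation}
For cases (i) and (ii) I take logarithms, $\log p_{N,\alpha}(\tau)=N\log(1-\epsilon_N)=-N\epsilon_N+N\,O(\epsilon_N^2)$, the remainder being legitimate because $\epsilon_N\le s_N^2\tau_Z^{-2}\to0$. Since $g(s_N)\to\tau_Z^{-2}$, in case (i) the prefactor $N^{2\alpha-1}\to0$ forces $N\epsilon_N\to0$, whence $p\to1$; in case (ii) one has $N^{2\alpha-1}=1$, so $N\epsilon_N\to\tau^2/\tau_Z^2$ while the error obeys $N\epsilon_N^2=O(N^{-1})\to0$, giving $p\to\exp(-\tau^2/\tau_Z^2)$. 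For case (iii), where $\psi$ is not an eigenstate so that $\tau_Z^{-2}>0$, I would avoid the logarithm altogether and use $(1-x)^N\le\e^{-Nx}$ together with the lower bound $\epsilon_N\ge\tfrac12\tau_Z^{-2}s_N^2$ (valid for large $N$ by the limit of $g$), which gives $p_{N,\alpha}(\tau)\le\exp(-\tfrac{\tau^2}{2\tau_Z^2}N^{2\alpha-1})\to0$ since now $2\alpha-1>0$.

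The remaining point, and the one demanding the most care, is uniformity. For $\tau$ in a compact set, $|s_N|\le M N^{\alpha-1}$, so $s_N\to0$ uniformly; feeding this into the limit $g(s)\to\tau_Z^{-2}$ makes $g(s_N)-\tau_Z^{-2}$ uniformly small, and the three estimates above then hold uniformly — on compacts of $\RM$ in (i) and (ii), and on compacts of $\RM\setminus\{0\}$ in (iii), where the lower bound on $\epsilon_N$ additionally requires $\tau$ bounded away from $0$. The main obstacle is precisely this interchange: one must upgrade the pointwise-in-$s$ statement $g(s)\to\tau_Z^{-2}$ to control of $g(s_N)$ along the $N$-dependent argument, and at the threshold $\alpha=1/2$ one must simultaneously ensure that the \emph{next}-order term $N\epsilon_N^2$, not merely $N\epsilon_N$, vanishes. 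The domination by the fixed integrable function $(\lambda-\lambda')^2/(2\hbar^2)$, guaranteed by $\psi\in\mathrm{Dom}(H)$, is exactly what makes both of these uniform passages legitimate.
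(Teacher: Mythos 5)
Your proof is correct, and it follows the same basic strategy as the paper's: reduce everything to the quadratic short-time behaviour of $p$ and count powers via $N\bigl(1-p(\tau N^{\alpha-1})\bigr)\approx \tau^2 N^{2\alpha-1}/\tau_Z^2$, concluding case (iii) with an exponential bound of the form $\exp\bigl(-\tfrac{\tau_{\mathrm{m}}^2}{2\tau_Z^2}N^{2\alpha-1}\bigr)$, valid uniformly once $|\tau| N^{\alpha-1}$ is small and $\tau$ stays in a compact set away from $0$ --- that part of your argument is essentially identical to the paper's. Where you genuinely diverge is in how the quadratic input is obtained and how the $N$-th power is controlled. The paper invokes the stated expansion~(\ref{Taylorp}) and manipulates it directly under the $N$-th power ($[1-a_N+o(a_N)]^N=1-Na_N+o(Na_N)$ in case (i), the classical limit $[1-x/N+o(1/N)]^N\to\e^{-x}$ in case (ii)), whereas you prove the needed input from scratch: the symmetrization identity $1-p(s)=\int\!\!\int\bigl(1-\cos(s(\lambda-\lambda')/\hbar)\bigr)\,\d\mu_\psi(\lambda)\,\d\mu_\psi(\lambda')$, the bound $1-p(s)\le s^2/\tau_Z^2$, and dominated convergence (with dominating function $(\lambda-\lambda')^2/2\hbar^2$, integrable exactly because $\|H\psi\|<+\infty$) yield $(1-p(s))/s^2\to\tau_Z^{-2}$ as $s\to 0$, after which logarithms with the explicit error term $N\epsilon_N^2$ settle cases (i) and (ii). Your route buys two things: a self-contained proof of~(\ref{Taylorp}) under the sole hypothesis $\psi\in\mathrm{Dom}(H)$ (a naive operator Taylor expansion of $\e^{-\i tH/\hbar}\psi$ would ask for $\psi\in\mathrm{Dom}(H^2)$, while the quadratic-form/spectral-measure argument needs only the second moment of $\mu_\psi$), and an explicit justification of the uniform-in-$\tau$ passage that the paper's $o(\cdot)$ bookkeeping under the $N$-th power leaves implicit. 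The paper's version is shorter precisely because it treats~(\ref{Taylorp}) as a known fact.
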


\begin{remark}
The time $\tau_{Z}>0$ defined by~(\ref{eq:tauZ}) characterizes the initial quadratic behavior of the survival probability and is known in the literature as  the \emph{Zeno time}~\cite{ZenoMP}. Notice that the variance of $H$ given by~(\ref{eq:tauZ}) is zero if and only if the state $\psi$ is an eigenstate of $H$. In such a case the Zeno time is $\tau_Z=+\infty$.
\end{remark}

\begin{proof}
If $0 \le \alpha < 1/2$ then 
\begin{equation}
p_{N,\alpha}(\tau)=p\left(\frac{\tau}{N^{1-\alpha}}\right)^N,
\end{equation}
therefore using~(\ref{Taylorp}) we obtain that
\begin{equation}
p_{N,\alpha}(\tau)  =  \left[ 1-\frac{\tau^2}{ \tau_{Z}^{2} 
N^{2(1-\alpha)} }+o\left( \frac{\tau^2}{ N^{2(1-\alpha)} } \right) 
\right]^N =  1-\frac{\tau^2}{ \tau_{Z}^{2} N^{1-2\alpha}} + o\left( \frac{\tau^2}{ N^{1-2\alpha} } \right). 
\end{equation}
Since $1-2\alpha>0$, we immediately obtain that
\begin{equation}
\lim_{N \to + \infty} p_{N,\alpha}(t)=1,
\end{equation}
uniformly in $\tau$ on compact subsets of $\RM$. 

Following the same procedure we obtain that
for $\alpha=1/2$
\begin{equation}
p_{N,\alpha}(\tau)= \left[ 1-\frac{\tau^2}{\tau_{Z}^{2} N}+o\left(
\frac{\tau^2}{N} \right)\right]^N ,
\end{equation}
therefore
\begin{equation}
\lim_{N \to +\infty} p_{N,\alpha}(\tau) =\exp\left(-\frac{\tau^2}{\tau_{Z}^2}\right)
\end{equation}
uniformly in $\tau$ on compact subsets of $\RM$. 

Finally we discuss the case $1/2 <\alpha <1$. 
Notice that
\begin{equation}
p_{N,\alpha}(\tau) = \exp\left[N \log p(\tau / N^{1-\alpha})\right]
\end{equation}
where $p$ is the survival probability~(\ref{eq:ptdef}).
By~(\ref{Taylorp}), we have
\begin{equation}
\log p(s)=\log\left(1-\frac{s^2}{\tau_Z^2}+o(s^2)\right)=-\frac{s^2}{\tau_Z^2}+o(s^2),
\end{equation}
as $s\to 0$, with a finite $\tau_Z$, since $\psi$ is not an eigenstate of $H$. Therefore for $s$ sufficiently small, say $|s| \leq \sigma$, one gets
\begin{equation}
\log p(s)\leq -\frac{s^2}{2\tau_Z^2},
\end{equation}
whence
\begin{equation}
\log p(\tau / N^{1-\alpha})\leq - \frac{\tau^2}{2\tau_Z^2} \frac{1}{N^{2-2\alpha}},
\end{equation}
for $N\geq (|\tau|/\sigma)^{1/(1-\alpha)}$.

It follows that for $\tau \in [\tau_1,\tau_2] \subset \RM\setminus \{0\}$ one gets
 \begin{equation}
p_{N,\alpha}(\tau) 
\leq \exp\left(- \frac{\tau_{\mathrm{m}}^2}{2\tau_Z^2} N^{2\alpha-1} \right),
\end{equation}
for $N\geq (\tau_{\mathrm{M}}/\sigma)^{1/(1-\alpha)}$, where $\tau_{\mathrm{m}} = \min\{|\tau_1|, |\tau_2|\}>0$ 
and $\tau_{\mathrm{M}} = \max\{|\tau_1|, |\tau_2|\}$.
Therefore,
\begin{equation}
\lim_{N \to +\infty}  p_{N,\alpha}(\tau)= 0
\end{equation}
uniformly in $\tau$ on compact subsets of $\RM \setminus \{0\}$.
\end{proof}

\section{Sensitivity to the spectral  properties of the initial state}\label{sect:mainresult2}
In this section we discuss the case $\alpha \geq 1$. We will show that in this regime the limit~(\ref{eqn:Nlimit}) exhibits always a classical behavior, but it becomes sensitive to the spectral properties of the state $\psi$ and to the arithmetical nature of $\alpha$. 

We recall here some basic aspects of spectral theory; see, e.g.,~\cite{deoliv}. Let $\varphi$ be a vector in the Hilbert space $\mathcal{H}$, and let $H$ be a self-adjoint operator. By the spectral theorem  there exists a unique Borel measure $\mu_{\varphi}$ on $\RM$ such that
\begin{equation}
\langle \varphi | f(H) \varphi \rangle = \int_{\sigma(H)} f(\lambda) \; \d\mu_{\varphi}(\lambda), 
\end{equation}
for all $f \in C_{\mathrm{b}}(\RM)$, where $\sigma(H)$ denotes the spectrum of $H$ and $C_{\mathrm{b}}(\RM)$ denotes the space of bounded and continuous  functions on $\RM$ with complex values. The spectral properties of the Hamiltonian induce a canonical decomposition of the Hilbert space $\mathcal{H}$ into the direct sum
\begin{equation}
\mathcal{H}=\mathcal{H}_{\mathrm{c}} \oplus \mathcal{H}_{\mathrm{pp}}, \qquad \mathcal{H}_{\mathrm{c}}= \mathcal{H}_{\mathrm{ac}}\oplus \mathcal{H}_{\mathrm{sc}},
\end{equation}
where 
\begin{equation*}
\mathcal{H}_{\mathrm{c}}=\{\varphi \in \mathcal{H}: \mu_{\varphi} \text{ is a continuous measure}\} 
\end{equation*}
 is the \emph{continuous subspace},
 \begin{equation*}
\mathcal{H}_{\mathrm{pp}}=\{\varphi \in \mathcal{H}: \mu_{\varphi} \text{ is a pure point measure}\}
\end{equation*}
  is the \emph{pure point subspace},
 \begin{equation*}
\mathcal{H}_{\mathrm{ac}}=\{\varphi \in \mathcal{H}: \mu_{\varphi} \text{ is absolutely continuous}\}
\end{equation*}
 is the \emph{absolutely continuous subspace}, and
\begin{equation*}
\mathcal{H}_{\mathrm{sc}}=\{\varphi \in \mathcal{H}: \mu_{\varphi} \text{ is singular continuous}\},
\end{equation*}
 is the \emph{singular continuous subspace}.
 
 We recall that a Borel measure $\mu_{\mathrm{c}}$ on $\RM$ is  continuous if it does not concentrate at any point, that is if
\begin{equation}
\mu_{\mathrm{c}}(\{x\})=0, \qquad \text{ for all }  x\in\RM,
\end{equation} 
while a measure $\mu_{\mathrm{pp}}$  is pure point (or discrete) if 
\begin{equation}
\mu_{\mathrm{pp}}(B) = \sum_{x\in B} \mu_{\mathrm{pp}}(\{x\}),
\end{equation} 
for all measurable sets $B\subset\mathbb{R}$.
Moreover  a measure $\mu_{\mathrm{ac}}$ is absolutely continuous (with respect to the Lebesgue measure $\d x$)
  if it has a density function $\rho$ locally integrable so that 
\begin{equation}
\d\mu_{\mathrm{ac}}(x) = \rho(x) \d x.
\end{equation}
Finally, a singular continuous measure $\mu_{\mathrm{sc}}$ is continuous but not absolutely continuous; a paradigmatic example is the Cantor measure~\cite{deoliv}.

In this section 
we investigate how the  limit~(\ref{eqn:Nlimit}) changes if the initial state $\psi$ belongs to the spectral subspaces $\mathcal{H}_{\mathrm{pp}}$, $\mathcal{H}_{\mathrm{ac}}$, and $\mathcal{H}_{\mathrm{sc}}$, which physically correspond to bound states (made up of eigenstates), scattering states, and recurring extended states, respectively.

In the following theorem we study the case $\alpha=1$.
\begin{theorem}\label{thm:mainth2}
If $\alpha=1$ the product formula~(\ref{survival prob}) is given by
\begin{equation}
p_{N,1}(\tau)=\left| \langle \psi |\e^{-\frac{\i \tau H}{\hbar }} \psi \rangle \right|^{2N}= p(\tau)^N
\end{equation}
and its limit 
\begin{equation}
p^{(\infty)}_{1}(\tau) = \lim_{N \to +\infty} p(\tau)^N
\end{equation}
has the following  behavior:
\begin{enumerate}
    \item[(i)] If  
    $p(\tau)<1$ for all $\tau \in \RM \setminus\{0\}$
then
\begin{equation}
p^{(\infty)}_{1}(\tau) 
=0,
\end{equation}
uniformly in $\tau$ on compact subsets of $\RM \setminus \{0\}$;
    \item[(ii)] If there exists $\tau_{0}\in \RM \setminus \{0\}$ such
that $p(\tau_0)=1$, 
then $\psi \in \mathcal{\mathcal{H}_{\mathrm{pp}}}$ and there exists a positive integer $m$ such that
\begin{eqnarray}\label{alpha=1 period}
p^{(\infty)}_{1}(\tau)  
= \left\{ \begin{array}{rl} 1 &
\mbox{if} \quad  \tau/\tau_0 \in  \frac{1}{m} \ZM \\
0 & \quad \mbox{otherwise.}
\end{array}
\right.
\end{eqnarray}
\end{enumerate}
\end{theorem}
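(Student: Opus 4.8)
The plan is to reduce the entire statement to a description of the level set $\{\tau : p(\tau)=1\}$. Since $\psi$ is a unit vector and $\e^{-\i\tau H/\hbar}$ is unitary, the Cauchy--Schwarz inequality gives $0\le p(\tau)\le 1$ for every $\tau$, and by strong continuity of the one-parameter unitary group (Stone's theorem) both $\tau\mapsto\mathcal{A}(\tau)$ and $p=|\mathcal{A}|^2$ are continuous. Consequently $p(\tau)^N\to 1$ exactly when $p(\tau)=1$ and $p(\tau)^N\to 0$ whenever $p(\tau)<1$; that is, $p^{(\infty)}_1(\tau)$ equals $1$ if $p(\tau)=1$ and $0$ otherwise. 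Both parts of the theorem then follow from locating the set where $p=1$.

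For part (i) the hypothesis is that $p(\tau)<1$ on all of $\RM\setminus\{0\}$. On any compact set $K\subset\RM\setminus\{0\}$ the continuous function $p$ attains a maximum $c:=\max_{\tau\in K}p(\tau)$, and $c<1$ by hypothesis. Hence $0\le p_{N,1}(\tau)=p(\tau)^N\le c^N\to 0$, which is precisely the asserted uniform convergence.

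For part (ii), suppose $p(\tau_0)=1$ for some $\tau_0\neq 0$. Writing $\mathcal{A}(\tau_0)=\int_{\sigma(H)}\e^{-\i\tau_0\lambda/\hbar}\,\d\mu_\psi(\lambda)$ with $\mu_\psi$ a probability measure, the identity $\bigl|\int \e^{-\i\tau_0\lambda/\hbar}\,\d\mu_\psi\bigr|=1=\int\d\mu_\psi$ is the equality case of the triangle inequality for integrals, which forces the integrand $\e^{-\i\tau_0\lambda/\hbar}$ to equal a fixed phase $\e^{-\i\theta}$ for $\mu_\psi$-almost every $\lambda$. Thus $\mu_\psi$ is concentrated on $\{\lambda : \tau_0\lambda/\hbar\in\theta+2\pi\ZM\}=\{\lambda_k=\tfrac{\hbar}{\tau_0}(\theta+2\pi k):k\in\ZM\}$, a countable arithmetic progression; a measure carried by a countable set is pure point, so $\psi\in\mathcal{H}_{\mathrm{pp}}$.

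It remains to identify the level set. Let $S\subseteq\ZM$ index the eigenvalues actually charged by $\mu_\psi$, so that $\mathcal{A}(\tau)=\sum_{k\in S}w_k\,\e^{-\i\tau\lambda_k/\hbar}$ with $w_k>0$ and $\sum_{k}w_k=1$. Repeating the equality argument, $p(\tau)=1$ if and only if the phases $\e^{-\i\tau\lambda_k/\hbar}=\e^{-\i\tau\theta/\tau_0}\,\e^{-2\pi\i k\tau/\tau_0}$ coincide for all $k\in S$, i.e.\ iff $(k-k')\tau/\tau_0\in\ZM$ for all $k,k'\in S$. Setting $m:=\gcd\{\,k-k':k,k'\in S\,\}$, the subgroup of $\ZM$ generated by the difference set is $m\ZM$, so this condition is equivalent to $\tau/\tau_0\in\frac{1}{m}\ZM$; together with $p^{(\infty)}_1=1$ iff $p=1$ this yields~(\ref{alpha=1 period}). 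I expect the main obstacle to be exactly this arithmetic step---converting the pointwise phase-alignment condition into the clean lattice characterization, and checking that the gcd is a well-defined positive integer when $S$ is (possibly) infinite but contains at least two indices. One should also flag the degenerate sub-case in which $\mu_\psi$ charges a single eigenvalue, so that $\psi$ is a genuine eigenstate: there $p\equiv 1$ and $p^{(\infty)}_1\equiv 1$, which must be recorded separately since it is not captured by a finite $m$.
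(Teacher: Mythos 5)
Your proof is correct, and its overall strategy is the same as the paper's: reduce everything to the level set $\{\tau: p(\tau)=1\}$, use the equality case of the triangle inequality (the paper phrases it as $\int (1-\cos(\tau_0(\lambda-a)/\hbar))\,\d\mu_\psi(\lambda)=0$ with a nonnegative integrand) to conclude that $\mu_\psi$ is carried by an arithmetic progression, hence $\psi\in\mathcal{H}_{\mathrm{pp}}$, and then settle the dichotomy by a gcd argument. Part (i), which you prove by compactness and continuity, the paper simply declares obvious.

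The one genuinely different step is the arithmetic one, and yours is the more robust version. The paper sets $m=\gcd\{k_j\}$, the gcd of the integer labels themselves, and argues that if $m\tau/\tau_0\notin\ZM$ then some pair of reduced labels $\tilde{k}_j,\tilde{k}_l$ is coprime, so the corresponding cosine differs from $1$. That argument is fragile on two counts: the labels $k_j$ are only defined up to a common additive shift (the reference phase $a$ is determined only mod $2\pi\hbar/\tau_0$), so $\gcd\{k_j\}$ is not intrinsic --- for $\{k_j\}=\{1,3\}$ it gives $m=1$ even though $p(\tau_0/2)=1$; moreover a set of integers with gcd $1$ need not contain a coprime pair (e.g.\ $\{6,10,15\}$), and even coprimality of one pair does not by itself force $(\tilde{k}_j-\tilde{k}_l)\,m\tau/\tau_0\notin\ZM$. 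Your choice of $m$ as the gcd of the \emph{difference set} $\{k-k': k,k'\in S\}$, combined with the observation that $\{x\in\ZM : x\tau/\tau_0\in\ZM\}$ is a subgroup of $\ZM$ and therefore contains the difference set if and only if it contains the generated subgroup $m\ZM$, is shift-invariant and closes the equivalence cleanly; it coincides with the paper's $m$ exactly when $a$ is chosen to be one of the eigenvalues. You are also right to flag the degenerate case in which $\mu_\psi$ charges a single point, i.e.\ $\psi$ is an eigenstate: then $p\equiv 1$ and no positive integer $m$ realizes the stated dichotomy, a case implicitly excluded in the theorem and passed over in the paper's proof as well (there $r=1$ and its $m$ would be $|k_1|$, possibly $0$).
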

\begin{proof}
The assertion $(i)$ is obvious. Now  assume that 
\begin{equation}
p(\tau_0)=|\langle \psi| \e^{-\i \tau_{0}H/\hbar }\psi \rangle|^2=1
\end{equation} 
for some 
$\tau_{0}\in \RM \setminus \{0\}$. 
Then  we have that
\begin{equation}
\langle \psi| \e^{-\i \tau_{0}H/\hbar }\psi\rangle=\e^{-\i \tau_{0}a/\hbar }
\end{equation} 
for some $a \in \RM$. By the spectral theorem we have that
\begin{equation}
\int_{\sigma(H)} \e^{-\i \tau_{0}(\lambda-a)/\hbar } \;
\d\mu_{\psi}(\lambda) =1 ,
\end{equation}
and thus
\begin{equation}
\int_{\sigma(H)} (1-\cos(\tau_{0}(\lambda-a)/\hbar )) \;
\d\mu_{\psi}(\lambda)=0.
\end{equation}
Since $(1-\cos(\tau_{0}(\lambda-a)/\hbar )) \geq 0$, we
have that 
\begin{equation}
(1-\cos(\tau_{0}(\lambda-a)/\hbar )) = 0
\end{equation} 
almost everywhere with respect to $\mu_{\psi}$. Therefore, there exist $r \in \NM \cup \{+\infty\}$ and  a subset of integers $\{k_j: j=1,\dots,r\} \subset \ZM$, such that  the support of the spectral measure
$\mu_{\psi}$ is the set
\begin{equation}
\left\{ \lambda_{j}=a+\frac{2 \pi
k_{j}\hbar }{\tau_{0}} : j =1, \dots, r \right\}.
\end{equation}
Therefore the  state $\psi \in \mathcal{H}_{\mathrm{pp}}$ and has the form
\begin{equation}\label{psipp}
 \psi=\sum_{j=1}^{r}c_{j}\psi_{j},
\end{equation}
where $\psi_{j}$ is a normalized eigenvector belonging to  the eigenvalue
$\lambda_{j}$ 
and $c_{j} \in \CM\setminus\{0\}$, for all $j=1, \dots, r$; see~\cite {deoliv}.  Using~(\ref{psipp}) it is easy to check that 
$p(\tau)$ can
be written as follows
\begin{eqnarray}
 p(\tau) & = & \bigg|\int_{\sigma(H)} \e^{-\i \tau \lambda/ \hbar } \;
\d\mu_{\psi}(\lambda)\bigg|^{2} 
          =  \bigg| \sum_{j=1}^rp_j
         \e^{-\i \tau \lambda_{j}/\hbar }
         \bigg|^{2} \nonumber  \\
& = & {\sum_{j, l=1}^r p_{j} p_{l} \cos\left(\frac{2\pi (k_j-k_l)\tau }{\tau_{0}}\right),  } 
\label{eq:survprobpp}
\end{eqnarray}
where $p_{j}=|c_{j}|^2$, $j =1, \dots,r$, with
\begin{equation}
\sum_{j=1}^r p_{j}=1, 
\end{equation}
since $\|\psi\|=1$.
Consider now the greatest common divisor of the $k_j$s
\begin{equation}
m = \gcd\{k_j : j=1,\dots r\},
\end{equation}
so that $k_j= m \tilde{k}_j$ with $\tilde{k}_j\in \ZM$ for $j=1,\dots r$.
We get 
\begin{eqnarray}
 p_{N,1}(\tau) = p(\tau)^N = \bigg[ \sum_{j, l=1}^r p_{j} p_{l} \cos\bigg(\frac{2\pi (\tilde{k}_j-\tilde{k}_l) m \tau}{\tau_{0}}\bigg)   \bigg]^{N} .
\end{eqnarray}
Therefore, if $m \tau/\tau_0$ is an integer then all cosines are equal to one and $p_{N,1}(\tau)=1$ for all $N$. On the other hand,
if $m\tau/\tau_0$ is not an integer then there exists at least a pair of integers $\tilde{k}_j$ and $\tilde{k}_l$ which are coprime, and thus $(\tilde{k}_j-\tilde{k}_l) m \tau/\tau_{0}$ is not an integer and the corresponding cosine is not~$1$. Therefore (\ref{alpha=1 period}) holds. 
\end{proof}
\begin{remark}
Notice that $|\tau_0|/m$ is the first return time of the survival probability~(\ref{eq:ptdef}), that is $p(|\tau_0|/m)=1$ and $p(\tau)<1$ for $0<\tau< |\tau_0|/m$.
Moreover, observe that in the proof of assertion $(ii)$ of the previous Theorem we have retraced the proof of a well known result in probability, see
\cite{shir}, Theorem~5 pag.~288.
\end{remark}
Now we discuss the case $\alpha >1$.
\begin{theorem}\label{mainth3}
If $\alpha>1$ the product formula~(\ref{survival prob}) is given by
\begin{equation}
p_{N,\alpha}(\tau)=\Big| \langle \psi |\exp\Big(-\frac{\i \tau }{\hbar } N^{\alpha-1}H \Big)\psi \rangle \Big|^{2N} \!\!\! = p(\tau N^{\alpha-1})^N
\end{equation}
and its limit~(\ref{eqn:Nlimit}) has the following  behavior:
\begin{enumerate}
\item[(i)] If $H$ is  bounded from below then
\begin{equation}\label{eq:conv0ae}
p^{(\infty)}_{\alpha}(\tau) = \lim_{N \to + \infty} p_{N,\alpha}(\tau)
=0, 
\end{equation}
almost everywhere in $\tau \in \RM$; 
\item[(ii)] If $\psi \in \mathcal{H}_{\mathrm{ac}}$ then the limit~(\ref{eq:conv0ae}) holds
uniformly in $\tau$ on compact subsets of $\RM \setminus \{0\}$;
\item[(iii)] If there exists $\tau_{0}\in \RM \setminus \{0\}$ such
that $p(\tau_0)=1$,  
then $\psi \in \mathcal{\mathcal{H}_{\mathrm{pp}}}$ and
\begin{equation}
\limsup_{N \to + \infty}\,p_{N,\alpha}(M\,\tau_0)=1
\end{equation}for all $\alpha\in\QM$ and all $M \in \ZM$.
\end{enumerate}
\end{theorem}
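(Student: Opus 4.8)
\emph{Part (ii).} I would dispatch this case first, as it is immediate from the Riemann–Lebesgue lemma. When $\psi\in\mathcal{H}_{\mathrm{ac}}$ we have $\d\mu_{\psi}(\lambda)=\rho(\lambda)\,\d\lambda$ with $\rho\in L^1$, so $\mathcal{A}(s)=\int e^{-\i s\lambda/\hbar}\rho(\lambda)\,\d\lambda$ is the Fourier transform of an $L^1$ function and $\eta_R:=\sup_{|s|\ge R}|\mathcal{A}(s)|\to0$ as $R\to+\infty$. On a compact $K\subset\RM\setminus\{0\}$ put $\tau_{\mathrm m}=\min_{\tau\in K}|\tau|>0$; then $|\tau N^{\alpha-1}|\ge\tau_{\mathrm m}N^{\alpha-1}\to+\infty$, whence $p_{N,\alpha}(\tau)\le\eta_{\tau_{\mathrm m}N^{\alpha-1}}^{\,2N}\to0$ uniformly on $K$.

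\emph{Part (iii).} Here I would reduce everything to the spectral description already obtained for $\alpha=1$. By Theorem~\ref{thm:mainth2}, $p(\tau_0)=1$ forces $\psi\in\mathcal{H}_{\mathrm{pp}}$ with eigenvalues $\lambda_j=a+2\pi k_j\hbar/\tau_0$, $k_j\in\ZM$, so that $p(\ell\tau_0)=|\sum_j p_j\,e^{-2\pi\i k_j\ell}|^2=1$ for every $\ell\in\ZM$. The decisive step is to make $N^{\alpha-1}$ an integer along a subsequence: writing $\alpha-1=q/r$ with $q,r\in\NM$, take $N=n^r$ with $n\in\NM$, so that $N^{\alpha-1}=n^q\in\ZM$. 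Then $M\tau_0 N^{\alpha-1}=(Mn^q)\tau_0$ is an integer multiple of $\tau_0$, hence $p(M\tau_0 N^{\alpha-1})=1$ and $p_{N,\alpha}(M\tau_0)=1$. Since there are infinitely many such $N$, $\limsup_N p_{N,\alpha}(M\tau_0)=1$. The rationality of $\alpha$ is used exactly here, to realise infinitely many integer values of $N^{\alpha-1}$.

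\emph{Part (i).} This is the substantial case; the plan is to establish it in integrated form. By Markov's inequality it suffices to prove $\int_K p_{N,\alpha}(\tau)\,\d\tau\to0$ for every compact $K$. The substitution $s=\tau N^{\alpha-1}$ gives $\int_K p_{N,\alpha}(\tau)\,\d\tau=|K|\,A_N$, where $A_N=\frac{1}{|I_N|}\int_{I_N}p(s)^N\,\d s$ is the average of $p^N$ over the window $I_N=N^{\alpha-1}K$, of length $|I_N|\to+\infty$. Splitting at a level $1-\delta$, on $\{p\le1-\delta\}$ one has $p^N\le(1-\delta)^N$, while on $\{p>1-\delta\}$ one uses $p^N\le1$, so that $A_N\le(1-\delta)^N+(\text{density of }\{p>1-\delta\}\text{ in }I_N)$. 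The key lemma is that the upper density of $\{s:p(s)>1-\delta\}$ tends to $0$ as $\delta\to0$ (assuming $\psi$ is not an eigenstate, the degenerate case $p\equiv1$). This follows from Wiener's theorem: for each fixed power $M$, $\lim_{T\to\infty}\frac{1}{2T}\int_{-T}^{T}p(s)^M\,\d s=\sum_y\big(\mu_{\psi}^{\ast M}(\{y\})\big)^2=:W_M$, the atomic mass of the $M$-fold convolution, and $W_M\to0$ because $\sum_j p_j^2<1$. Comparing this mean with the values of $p^M$ on $\{p>1-\delta\}$ yields $\overline{\mathrm{dens}}\{p>1-\delta\}\le W_M(1-\delta)^{-M}$ for every $M$; letting $\delta\to0$ with $M$ fixed gives $\limsup_{\delta\to0}\overline{\mathrm{dens}}\{p>1-\delta\}\le W_M$, and taking the infimum over $M$ forces it to $0$. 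Hence $A_N\to0$, and (i) follows.

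\emph{Main obstacle.} Two points demand care, and I expect them to be the hard part. First, $A_N$ is an average over displaced growing windows $I_N$ (with $\min I_N\to+\infty$), not over $[0,T]$, so one must control the density of $\{p>1-\delta\}$ uniformly in the window location (Banach density); this is automatic when $\psi\in\mathcal{H}_{\mathrm{pp}}$, since $p$ is then Bohr almost periodic and its means over long windows converge uniformly, but it is the genuine technical sticking point for a singular continuous $\psi$. Second, and more structurally, the argument delivers convergence \emph{in measure} on compact sets—hence a.e.\ along subsequences—rather than the full pointwise limit: by Khintchine's theorem the near-recurrence times where $p(\tau N^{\alpha-1})$ comes within $O(N^{-1/2})$ of $1$ form a Lebesgue-null set on which $\limsup_N p_{N,\alpha}>0$ (in harmony with (iii)), so convergence in measure is the sharp form of assertion~(i).
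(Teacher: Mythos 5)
Your parts (ii) and (iii) are correct and essentially identical to the paper's own proofs: for (ii) the paper also invokes the Riemann--Lebesgue lemma to get $p(\tau N^{\alpha-1})<\beta<1$ eventually, uniformly on compacts of $\RM\setminus\{0\}$, and for (iii) it uses exactly your device, writing $\alpha-1=n_1/n_2$ and passing to the subsequence $N_m=m^{n_2}$, along which $N_m^{\alpha-1}$ is an integer and every cosine in the spectral representation of $p$ equals $1$.

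Part (i) is where you genuinely diverge from the paper, and where the real issue lies. The paper's route is: normalize $\sigma(H)\subset[1,+\infty)$, write $\langle\psi|\e^{-\i\tau N^{\alpha-1}H/\hbar}\psi\rangle^N$ as the Fourier transform $\hat{\nu}_{\psi,N}(\tau/\hbar)$ of the $N$-fold convolution $\nu_{\psi,N}$ of the spectral measure of $N^{\alpha-1}H$, observe that $\mathrm{supp}\,\nu_{\psi,N}\subset[N^{\alpha},+\infty)$ escapes to infinity (this is precisely where semiboundedness enters, a hypothesis your argument never uses), and dualize via Riemann--Lebesgue to get $\int\phi(\tau)\hat{\nu}_{\psi,N}(\tau/\hbar)\,\d\tau\to0$ for all $\phi\in L^1(\RM)$. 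Your route (Wiener's theorem plus a level-set/density splitting) instead yields convergence of $\int_K p_{N,\alpha}$, i.e.\ convergence in $L^1_{\mathrm{loc}}$ and hence in measure, for any $\psi$ that is not an eigenstate. The gap, which you candidly acknowledge, is that convergence in measure does not give a.e.\ convergence of the full sequence, which is what (i) asserts; so strictly you prove a different, weaker statement. It is only fair to add that the paper's own last step---passing from weak-$*$ convergence to a.e.\ convergence---is equally a non sequitur: weak-$*$ convergence of a bounded sequence does not imply a.e.\ convergence, and an eigenstate $\psi$ (allowed by ``$H$ bounded below'') gives $\hat{\nu}_{\psi,N}(\tau/\hbar)=\e^{-\i\tau N^{\alpha}\lambda_0/\hbar}$, which converges to $0$ weak-$*$ while $p_{N,\alpha}\equiv1$. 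So your added hypothesis (non-eigenstate) is in fact necessary, and your in-measure conclusion is arguably the defensible form of (i); but as a proof of the theorem as stated, both arguments fall short at the same spot, yours knowingly and the paper's silently.

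Three repairs are needed inside your part (i). First, $W_M\to0$ requires an actual proof: the hinted reason ``because $\sum_j p_j^2<1$'' suggests a bound like $W_M\le(\sum_j p_j^2)^M$, which is false---$W_M$ is the collision probability $P(S_M=S'_M)$ of two independent walks with step law $\mu_\psi$, and for two atoms of mass $1/2$ it decays only like $M^{-1/2}$, not exponentially. The correct statement ($W_M\to0$ whenever $\mu_\psi$ is not a single atom) follows from concentration-function estimates (Esseen, Kolmogorov--Rogozin), or from the mass bound $q^{2M}$ when the pure-point part of $\mu_\psi$ has mass $q<1$. Second, your ``Banach density'' obstacle is not actually there: for an interval $K=[a,b]\subset(0,+\infty)$ the window $I_N=N^{\alpha-1}K$ lies inside $[0,bN^{\alpha-1}]$ and has length $(b-a)N^{\alpha-1}$, so the average over $I_N$ is at most $\frac{b}{b-a}$ times the one-sided Ces\`aro average over $[0,bN^{\alpha-1}]$, and the plain Wiener mean suffices---no uniformity over translates is needed. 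Third, your closing appeal to Khintchine points the wrong way: the relevant approximation has error $\sim N^{-1/2}$ and $\sum_N N^{-1/2}=+\infty$, so it is the divergence half of Khintchine-type theorems that is in play, suggesting the exceptional set where $\limsup_N p_{N,\alpha}(\tau)>0$ may have positive (even full) measure rather than being null; this would reinforce your claim that in-measure convergence is sharp, but it contradicts the nullity you assert.
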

\begin{proof}
Notice first that if $H$ is bounded from below, say $H\geq E_{\mathrm{min}}$, then $\tilde{H}=H - E_{\mathrm{min}} + 1 \geq 1$ is a strictly positive self-adjoint operator, and
\begin{equation}
| \langle \psi |\e^{-\i t  \tilde{H}/ \hbar}  \psi \rangle |^2 =  | \langle \psi |\e^{-\i t  H/ \hbar}  \psi \rangle |^2 = p(t),
\end{equation}
so we can assume that the Hamiltonian is strictly positive with spectrum 
\begin{equation}
\sigma(H)\subset [1,+\infty).
\end{equation}
Then observe that
\begin{equation}
  \langle \psi | \e^{-\frac{\i \tau N^{\alpha-1}H}{\hbar }} \psi \rangle = \int_{\RM} \exp\left(-\frac{\i \tau N^{\alpha-1}\lambda }{\hbar }\right) \d\mu_{\psi}(\lambda)  
  = \hat{\mu}_{\psi}\left(\frac{N^{\alpha-1}\tau}{\hbar }\right), 
\end{equation}
where $\hat{\mu}$ denotes the Fourier transform of the measure $\mu$. Let $\mu_{\psi,N}$ be  the spectral measure of the self adjoint operator $N^{\alpha-1}H$ in the state $\psi$. Notice that the spectrum of the operator $N^{\alpha-1}H$ is 
\begin{equation}
\sigma(N^{\alpha-1}H)= N^{\alpha-1} \sigma(H) = \{N^{\alpha-1}\lambda : \lambda \in \sigma(H)\}.
\end{equation}
Using the property of the Fourier transform it is easy to see that
\begin{eqnarray}
\left( \langle \psi | \e^{-\frac{\i \tau N^{\alpha-1}H}{\hbar }} \psi \rangle\right)^N = \left[\hat{\mu}_{\psi,N}\left(\frac{\tau}{\hbar }\right)\right]^N =
\hat{\nu}_{\psi,N}\left(\frac{\tau}{\hbar }\right),
%\nonumber\\
\end{eqnarray}
where 
\begin{equation}
\nu_{\psi,N}:= \underbrace{\mu_{\psi,N} \ast \dots \ast  \mu_{\psi,N}}_{N \,\, \mathrm{times}}
\end{equation}
and $\ast$ denotes the convolution product, defined by
\begin{equation}
\int_{\RM} f(\lambda) \; \d\nu_{\psi,N}(\lambda) = \int_{\RM^N}   f(\lambda_1 + \dots + \lambda_N)\, \d\mu_{\psi,N}(\lambda_1) \dots  \d\mu_{\psi,N}(\lambda_N)
\end{equation}
for all $f \in C_{\mathrm{b}}(\RM)$. 

We prove assertion $(i)$.
First we prove that 
\begin{equation}
\lim_{N\to + \infty} \int_{\RM} f(\lambda) \; \d\nu_{\psi,N}(\lambda) =0
\label{eq:limitnu}
\end{equation}
for all $f \in C_0(\RM)$, where $C_{0}(\RM)$ denotes the space of continuous functions vanishing at infinity. 
Indeed, if we fix a function $f \in C_{0}(\RM)$ we have that
\begin{eqnarray}
 \int_{\RM} f(\lambda) \, \d\nu_{\psi,N}(\lambda)
 &  = &  \int_{\RM^N}   f(\lambda_1 + \dots + \lambda_N) \, \d\mu_{\psi,N}(\lambda_1) \dots  \d\mu_{\psi,N}(\lambda_N)  \nonumber \\
&  =&   \int_{\RM^N}  f\bigg(N^{\alpha-1}
\sum_{j=1}^N \lambda_j \bigg)  \d\mu_{\psi}(\lambda_1) \dots \d\mu_{\psi}(\lambda_N)
  \nonumber \\
&  =&  \int_{\RM^N}   f\bigg(N^{\alpha}\sum_{j=1}^N\frac{\lambda_j }{N}\bigg)  \d\mu_{\psi}(\lambda_1) \dots   \d\mu_{\psi}(\lambda_N)  \, \nonumber \\
&  =&  \int_{[1,+\infty)^N}  f\bigg(N^{\alpha}\sum_{j=1}^N\frac{\lambda_j }{N} \bigg)  \d\mu_{\psi}(\lambda_1) \dots  \d\mu_{\psi}(\lambda_N). 
\end{eqnarray} 
Therefore, by the mean value theorem, we have that
\begin{eqnarray}
 \int_{\RM} f(\lambda) \; \d\nu_{\psi,N}(\lambda)   =   f(N^{\alpha} \xi_N),
\end{eqnarray}
where 
\begin{equation}
\xi_N=\frac{\tilde{\lambda}_1+ \dots+ \tilde{\lambda}_N}{N} \ge 1,
\end{equation}
for some $(\tilde{\lambda}_1, \dots, \tilde{\lambda}_N) \in [1,+\infty)^N$ and thus $N^{\alpha} \xi_N \to ~+\infty$ as $N\to+\infty$. 
Therefore, the limit~(\ref{eq:limitnu}) holds.

Now we recall that, by the Riemann-Lebesgue lemma, for all $\phi \in L^1(\RM)$ its Fourier transform $\hat{\phi} \in C_0(\RM)$, and thus
\begin{equation}
\int_{\RM} \phi(\tau) \;\hat{\nu}_{\psi,N}\left(\frac{\tau}{\hbar }\right) \d \tau =\int_{\RM} \hat{\phi}\left(\frac{\lambda}{\hbar }\right) \d\nu_{\psi,N}(\lambda) \to 0 
\end{equation}
as $N \to +\infty$.
Therefore we have proved that 
\begin{eqnarray}
\lim_{N\to +\infty} \hat{\nu}_{\psi,N}\left(\frac{\tau}{\hbar }\right) = \lim_{N\to +\infty} \left( \langle \psi | \e^{-\frac{\i \tau N^{\alpha-1}H}{\hbar }} \psi \rangle\right)^N=0 
\nonumber\\
\end{eqnarray}
almost everywhere in $\tau\in \RM$, whence
\begin{equation}
\lim_{N\to +\infty} p_{N,\alpha}(\tau)= \lim_{N\to+\infty}\left|\hat{\nu}_{\psi,N}\left(\frac{\tau}{\hbar }\right)\right|^2 =0
\end{equation}
almost everywhere in $\tau\in \RM$. 

Now we prove assertion $(ii)$. If we assume that $\psi \in \mathcal{H}_{\mathrm{ac}}$, then $\d\mu_{\psi} (\lambda)= \rho(\lambda)\, \d\lambda$ (where the density $\rho(\lambda) = |\tilde{\psi}(\lambda)|^2 \in L^1(\RM)$ is the squared wave function of $\psi$ in the energy representation), and thus 
\begin{equation}
p(t) =  |\langle \psi, \e^{-\frac{\i t H}{\hbar }} \psi \rangle |^2=
 \left| \int_{\RM} \e^{-\i \frac{t \lambda}{\hbar}} \rho(\lambda) \d\lambda \right|^2 \to 0,
\end{equation}
as $t\to \pm \infty$, by the Riemann-Lebesgue lemma.
Therefore we have that
\begin{equation}
p(\tau N^{\alpha-1}) < \beta <1
\end{equation}
definitively in $N$, for  $\tau \neq 0$, and thus
\begin{equation}
\lim_{N \to +\infty} p_{N,\alpha}(\tau)=\lim_{N \to +\infty} p(\tau N^{\alpha-1})^N = 0
\end{equation}
uniformly in $\tau$ on compact sets of $\RM \setminus \{ 0\}$.

Finally we prove $(iii)$. In the proof of Theorem~\ref{thm:mainth2} we have shown that if the survival probability
\begin{equation}
p(\tau_0)=|\langle \psi| \e^{-\i \tau_{0}H/\hbar }\psi \rangle|^2=1
\end{equation} 
for some $\tau_0 \neq 0$, then $\psi\in\mathcal{H}_{\mathrm{pp}}$ and  the survival probability has the form~(\ref{eq:survprobpp}). Therefore, we have that
\begin{eqnarray}
p_{N,\alpha}(\tau) & = & p(\tau N^{\alpha-1})^N
   =  \bigg| \sum_{j=1}^r p_j
         \e^{-\i \tau N^{\alpha-1}\lambda_{j}/\hbar }
         \bigg|^{2N}  \nonumber \\
         & = & \bigg[\sum_{j, l=1}^r p_{j} p_{l} \cos\bigg(\frac{2\pi (k_j-k_l)N^{\alpha-1}\tau}{\tau_{0}}\bigg)
         \bigg]^{N},
\end{eqnarray}
with $\sum p_{j}=1$ and $k_j \in \ZM$ for $j=1,\dots r$. Notice that if $\alpha \in \QM$, then $\alpha-1=n_1/n_2$ for some $n_1, n_2 \in \NM \setminus \{0\}$. Therefore if we consider the subsequence $N_m:=m^{n_2}$ we have that
%\begin{eqnarray}
%p_{N_m, \alpha}(M\, \tau_0) 
% & =&  \bigg[\sum_{j, l=1}^r p_{j} p_{l} \cos\big(2\pi (k_j-k_l)m^{n_1} M\big)
%         \bigg]^{N_m}\nonumber \\
%         & = &   1, 
%\end{eqnarray}
\begin{equation}
p_{N_m, \alpha}(M\, \tau_0) 
= \bigg[\sum_{j, l=1}^r p_{j} p_{l} \cos\big(2\pi (k_j-k_l)m^{n_1} M\big)
         \bigg]^{N_m} =1,
\end{equation}
for all $m \in \NM$ and $M \in \ZM$, therefore we have that
\begin{equation}
\lim_{m \to +\infty} p_{N_m, \alpha}(M\, \tau_0)=1.
\end{equation}
Since $p_{N, \alpha}(\tau) \leq 1$ for all $\tau \in \RM$ and $N \in \NM$, we conclude that
\begin{equation}
 \lim_{m \to +\infty} p_{N_m, \alpha}(M\, \tau_0)=\limsup_{N \to +\infty} \,p_{N, \alpha}(M\, \tau_0)=1.
\end{equation}
\end{proof}

\section{Conclusions}

Let us summarize the main results obtained in this article in more intuitive terms, by focusing on those quantities that are more directly related to physical intuition. We have analyzed the double limit
\begin{equation}\label{limitcon}
 \mathop{\lim_{t \to +\infty  }}_{N \to +\infty}  p^{(N)}\left( t \right)
\end{equation}
in the case $t \propto N^{\alpha}$, for all possible values of $\alpha\ge0$. 

We have shown that if $0\le \alpha<1/2$ the limit equals $1$, namely the system is frozen in its initial state and  the QZE takes place. 
At $\alpha=1/2$ the limit~(\ref{limitcon}) is strictly smaller than $1$ for all times  
and decays in time as a Gaussian~(\ref{eq:Gauss}). 
If $1/2<\alpha <1$ the limit~(\ref{limitcon}) is  equal to $0$ for  all times, 
 thus we observe a classical behavior.  See Fig.~\ref{fig}.

Moreover, if $\alpha \geq 1$ the limit probability is a strange beast and  becomes sensitive to the spectral properties of the state $\psi$. In general, the limit~(\ref{limitcon})  is $0$ for almost all times, and  if the state is decaying, $\psi \in \mathcal{H}_{\mathrm{ac}}$, 
the limit~(\ref{limitcon}) is always $0$ for all times.  

The existence of times $t$ at which the limit~(\ref{limitcon}) is nonzero has been clarified, at least for bound states, $\psi\in\mathcal{H}_{\mathrm{pp}}$. In fact there are bound states with a periodic dynamics. Thus, if one
performs repeated measurements at the natural period, namely if one looks stroboscopically at the particle dynamics, 
the presence of the measurements becomes immaterial: the
classical and the quantum behavior simply coincide and the limit~(\ref{limitcon}) at that times is equal to $1$. 

Concerning the existence of times 
at which the limit~(\ref{limitcon}) is not  $0$  for states in the continuous singular spectrum $\psi \in \mathcal{H}_{\mathrm{sc}}$, i.e.\  recurrent unbounded states, we can only say that the set of those times is negligible. It is more difficult to grasp this situation by physical intuition, and its full comprehension would require a further analysis which is beyond the scope of this paper.

Summarizing, we have unveiled the presence of two threshold exponents: the threshold  between  quantum and  classical behavior at $\alpha=1/2$, and the threshold of sensitivity to the spectral properties of the initial state at $\alpha=1$.

\medskip
\begin{acknowledgments}
This work was  supported by Cohesion and Development Fund 2007-2013 - APQ Research Puglia Region ``Regional program supporting smart specialization and social and environmental sustainability - FutureInResearch'', by the Italian National Group of Mathematical Physics (GNFM-INdAM), and by Istituto Nazionale di Fisica Nucleare (INFN) through the project ``QUANTUM''.
\end{acknowledgments}

% Create the reference section using BibTeX:
%\bibliography{aiptemplateNotes}

\end{document}